\theoremstyle{plain}
\newtheorem{defi}{Definition}
\newtheorem{prop}{Proposition}
\newcommand{\be}{\begin{eqnarray}}
\newcommand{\ee}{\end{eqnarray}}
\newcommand{\bc}{\begin{center}}
\newcommand{\ec}{\end{center}}
\newcommand{\nn}{\nonumber \\}
\newcommand{\lb}{\label}
\newcommand{\p}[1]{(\ref{#1})}
\begin{document}

\begin{titlepage}

\vspace*{0.2cm}

\renewcommand{\thefootnote}{\star}
\begin{center}

{\LARGE\bf  Group manifolds and homogeneous spaces with HKT geometry: the role of automorphisms.}

\vspace{2cm}

{\Large A.V. Smilga} \\

\vspace{0.5cm}

{\it SUBATECH, Universit\'e de
Nantes,  4 rue Alfred Kastler, BP 20722, Nantes  44307, France. }

\end{center}
\vspace{0.2cm} \vskip 0.6truecm \nopagebreak

   \begin{abstract}
\noindent We present a new simple proof of the fact that certain group manifolds  as well as certain homogeneous spaces $G/H$ of dimension $4n$ admit a triple of integrable complex structures that satisfy the quaternionic algebra and are covariantly constant with respect to the same torsionful Bismut connection, i.e. exhibit the HKT geometry. The key observation is that different complex structures are interrelated by  automorphisms of the Lie algebra. To construct the quaternion triples, one only needs to construct the proper automorphisms, which is a more simple problem. 
   \end{abstract}

\end{titlepage}

\setcounter{footnote}{0}

\setcounter{equation}0
\section{Motivation}
HKT manifolds,\footnote{``HKT" means ``Hyper-K\"ahler with Torsion". This name is somewhat misleading, because these manifolds are not hyper-K\"ahler and not even K\"ahler, but it is well established in the literature, no better term has been proposed, and we will use it.}
discovered first by physicists \cite{HKT}, attracted also a considerable interest of mathematicians \cite{HKT-math}. Mathematicians got interested  in the HKT manifolds because of  their rich and nontrivial geometric structure. The physical interest stems from the fact that supersymmetric sigma models defined on HKT target spaces enjoy extended supersymmetries, which makes them relevant for some field theory applications.

I said ``field theory", and there are, indeed, interesting (1+1)-dimensional supersymmetric sigma models that live on the HKT manifolds \cite{Spindel}, but there are also more simple {\it mechanical} models where the dynamic variables---the coordinates of the manifold  and their Grassmann superpartners---depend only on time, and there is no spatial dependence.

Consider, in particular, the sigma model with the following superfield action \cite{Coles},
\be
\lb{SgenN1}
S \ =\ \frac i2 \int  d\theta dt \, g_{MN}({\cal X}) {\dot {\cal X}}^M {\cal D} {\cal X}^N -  
\frac 1{12}
\int d\theta   dt \, C_{KLM} {\cal D} {\cal X}^K {\cal D} {\cal X}^L {\cal D} {\cal X}^M \, ,
    \ee
    where ${\cal X}^M(t,\theta) \  = \ x^M(t) + i \theta \Psi^P(t)$ are $D=4n$ ${\cal N} = 1$
    superfields; $x^M$ are the coordinates on the manifold (on its one particular chart), 
    $\Psi^M$ are their Grassmann superpartners, $g_{MN}$ is the metric, $C_{KLM}$ is the totally antisymmetric torsion tensor and ${\cal D} = \partial/\partial \theta - i\theta \partial/\partial t$ is the supersymmetric covariant derivative.
    
    The ${\cal N} = 1$ supersymmetry of the action \p{SgenN1}  is manifest.
    
    \begin{prop}
    If the action \p{SgenN1} stays invariant under three extra supersymmetries
    \be
    \delta  {\cal X}^M  \ =\ \epsilon_p (I_p)_N{}^M {\cal D} {\cal X}^N \,,
     \ee
     where $\epsilon_{p=1,2,3}$ are  Grassmann transformation parameters and $I_p \equiv 
     (I,J,K)$ is a triple of integrable complex structures satisfying the quaternionic algebra
      \be
	      \lb{quatern-alg}
      I_p I_q \ =\ -\delta_{pq} + \varepsilon_{pqs} I_s \,,
       \ee
    the target space of this model is an HKT manifold: the complex structures $I_p$ are covariantly constant with respect to the Bismut connection,
    \be 
    \lb{Bismut}
    G^M_{NP} \ = \ \Gamma^M_{NP} + \frac 12 g^{MK} C_{KNP}\,,
     \ee 
where $\Gamma^M_{NP}$ are the ordinary symmetric Christoffel symbols and $C_{KNP}$ is the totally antisymmetric torsion tensor.
 \end{prop}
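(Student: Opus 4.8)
The plan is to compute the variation $\delta S$ of the action \p{SgenN1} under the transformation $\delta\mathcal{X}^M=\epsilon_p(I_p)_N{}^M\mathcal{D}\mathcal{X}^N$ explicitly and to read off the conditions that force it to vanish. It is convenient to abbreviate $\psi^M\equiv\mathcal{D}\mathcal{X}^M$ and to exploit the superspace identity $\mathcal{D}^2=-i\partial_t$, so that $\mathcal{D}\psi^M=-i\dot{\mathcal{X}}^M$ and the induced variation of $\psi$ reads
\[
\delta\psi^M=\mathcal{D}\,\delta\mathcal{X}^M=\epsilon_p\bigl[(\partial_K(I_p)_N{}^M)\,\psi^K\psi^N-i(I_p)_N{}^M\dot{\mathcal{X}}^N\bigr].
\]
All Grassmann parameters $\epsilon_p$ are odd and the $\psi^M$ are odd, while the $\dot{\mathcal{X}}^M$ are even; keeping track of these parities is the only bookkeeping hazard of the calculation.

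First I would vary the kinetic term, remembering that $g_{MN}$ is field-dependent (producing a piece $\partial_Pg_{MN}\,\delta\mathcal{X}^P$), that $\dot{\mathcal{X}}$ varies as $\partial_t\delta\mathcal{X}$, and that $\psi$ varies as above. After integrating by parts in $t$ and in $\theta$ to strip derivatives off the transformation parameter, the resulting monomials fall into two structures: terms quadratic in $\dot{\mathcal{X}}$ and terms of the schematic form $\dot{\mathcal{X}}\psi\psi$. Since $\dot{\mathcal{X}}^M\dot{\mathcal{X}}^N$ is symmetric, the coefficient $g_{MN}(I_p)_K{}^N$ of the first structure must be antisymmetric in its free indices; this is precisely the statement that each $I_p$ is compatible with the metric, $g((I_p)X,(I_p)Y)=g(X,Y)$, equivalently that $(I_p)_{MN}\equiv g_{MK}(I_p)_N{}^K$ is antisymmetric. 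This Hermiticity condition is thus forced by invariance.

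Next I would vary the torsion term. The crucial observation is that the replacement $\mathcal{D}\psi=-i\dot{\mathcal{X}}$ inside $\delta\psi$ generates, within $\delta S$, \emph{additional} $\dot{\mathcal{X}}\psi\psi$ contributions whose coefficient is built from $C_{KLM}$ contracted with $I_p$. Collecting all $\dot{\mathcal{X}}\psi\psi$ pieces from both terms of the action, the derivatives $\partial_Pg$ and $\partial_P(I_p)$ organize into the Levi-Civita covariant derivative of $I_p$, while the torsion contribution shifts $\Gamma^M_{NP}$ into the Bismut connection $G^M_{NP}$ of \p{Bismut}. Demanding that this combined coefficient vanish yields exactly
\[
\nabla^{(G)}_P(I_p)_N{}^M=\partial_P(I_p)_N{}^M+G^M_{PK}(I_p)_N{}^K-G^K_{PN}(I_p)_K{}^M=0,
\]
i.e. covariant constancy of each $I_p$ with respect to the Bismut connection, which is the assertion of the proposition. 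The leftover four-fermion ($\psi^4$) terms, produced by $\partial C$ and by $\partial(I_p)$ in $\delta\psi$, reduce—after invoking the quaternionic algebra \p{quatern-alg} and the Hermiticity just established—to the vanishing of the Nijenhuis tensors of the $I_p$, which holds by the hypothesis that the $I_p$ are integrable, so they impose nothing new.

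The main obstacle is organizational rather than conceptual. One must carry the Grassmann signs cleanly through every integration by parts, and, above all, recognize that the a priori unrelated objects $\partial_Pg$, $\partial_P(I_p)$ and the bare torsion $C_{KLM}$ assemble into the single covariant quantity $\nabla^{(G)}(I_p)$ with the torsionful connection \p{Bismut}. Isolating the coefficient of the independent structure $\dot{\mathcal{X}}\psi\psi$ and checking that the $C$-dependent pieces are exactly what completes $\Gamma$ into $G$ is the heart of the argument.
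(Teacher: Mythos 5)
The paper does not actually prove this proposition: immediately after stating it, it remarks that the assertion was proved in the component language in \cite{HKT} and in the ${\cal N}=1$ superfield language in \cite{FIS-nonlin}, so there is no in-paper argument to compare against. Your outline is precisely the standard superfield derivation used in those references --- antisymmetry of $(I_p)_{MN}$ forced by the $\dot{\cal X}\dot{\cal X}$ sector, Bismut covariant constancy forced by the $\dot{\cal X}\psi\psi$ sector, and the $\psi^4$ sector reducing to the integrability hypothesis --- and its architecture is sound and correctly oriented (invariance implies $\nabla^{(G)}I_p=0$, which is what the proposition asserts). Be aware, though, that what you have written is a plan rather than a completed computation: the claims that the $\partial g$, $\partial I_p$ and $C$ pieces assemble exactly into $\nabla^{(G)}I_p$ (with no leftover piece symmetric in the two $\psi$ indices) and that the four-fermion terms collapse onto the Nijenhuis tensors are the entire nontrivial content of the cited proofs, and they are asserted here rather than verified.
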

 
 This assertion was proved in the component language in \cite{HKT} and, in the language of ${\cal N} = 1$ superfields, in recent \cite{FIS-nonlin}.
 
 A large class of HKT manifolds are group manifolds.\footnote{Note, however that not {\it all} HKT manifolds have group nature. For example, the Delduc-Valent HKT manifold \cite{DV}
 is not associated with any group.} 
 The full list of the HKT group manifolds for the compact groups including only one non-Abelian factor, is given  below. 
 \be
\lb{list-group-HKT}
&&SU(2l+1), \ SU(2l) \times U(1), \ Sp(l) \times [U(1)]^l , \nn
&&SO(2l+1) \times [U(1)]^l, 
SO(4l) \times [U(1)]^{2l}, \ SO(4l+2) \times [U(1)]^{2l-1} ,   \nn 
&&G_2 \times [U(1)]^2 , \  F_4 \times   [U(1)]^4  , \ E_6 \times [U(1)]^2  , 
\ E_7 \times [U(1)]^7, \nn && E_8 \times [U(1)]^8\, .
 \ee
 The fact that all these manifolds are HKT was proven first in \cite{Spindel}. Another proof was suggested in \cite{Joyce}. In this paper we present still another proof, which includes some salient features of the proofs in \cite{Spindel,Joyce}, but is more simple and explicit.  Our explicit constructions are based on the observation that the complex structures $(I,J,K)$ satisfying the quaternionic algebra \p{quatern-alg} can be chosen so that $J$ and $K$ are derived from $I$ by certain automorphisms of the relevant Lie algebra ${\mathfrak g}$.
 
 Besides group HKT manifolds, there are also many HKT manifolds representing homogeneous spaces, like $SU(4)/SU(2)$ \cite{Joyce,OP}. We will discuss them in Sect. 5 of the paper.
 
\section{Basic facts and definitions}
\setcounter{equation}0
\begin{defi}
\label{defi-almost}
        A  {\sl complex manifold} is a   manifold equipped with a 
tensor field $I_{MN}$ satisfying the properties
       {\bf (i)} $ I_{MN} = -I_{NM}$; {\bf (ii)} $I_M{}^N I_N{}^P = -\delta_M^P$;  
       \be
       \lb{Nijen}
 {\rm \bf (iii)} \qquad \qquad      {\cal N}_{MN}{}^K \ =\ \partial_{[M} I_{N]}{}^K -  
I_M{}^P  I_N{}^Q \partial_{[P}  I_{Q]}{}^K  \ =\ 0 \,. 
       \ee
    The tensor $I_M{}^N$ is called the {\em
       complex structure}.
      \end{defi}
         The first two conditions in this definition imply that the manifold is even-dimensional. The third condition  is the requirement for the so-called {\sl Nijenhuis tensor} to vanish. According to the {\sl Newlander-Nirenberg theorem} \cite{NN} (see also \cite{NN-ja}), this is necessary and sufficient for {\sl integrability}, i.e.  for a possibility to introduce the  complex coordinates associated with the complex structure $I$.  
          
          Mathematicians often consider real and complex manifolds that are not equipped with a metric. But the group manifolds, which we discuss in this paper, {\it are}. Thus,  we will always assume that the metric (allowing to lift and to lower the indices) is defined. For a complex manifold (and we will show soon that an even-dimensional  group manifold is  complex), the metric expressed in complex coordinates $z^j$ has a Hermitian form:
 \be
 ds^2 \ =\ h_{j\bar k} dz^j d\bar z^{\bar k}\,, \qquad  {\rm with} \quad   \overline{h_{j\bar k}} = h_{k\bar j}\,.
 \ee        
  
  The tensor $I_M{}^M$ can be represented as 
  \be
  \lb{IMN}
  I_M{}^N \ =\ e_{MA} I_{AB} e^N_B \,,
   \ee
   where $e_{MA}$ and $e^N_B$ are the {\sl vielbeins} such that $g_{MN} = e_{MA}   e_{NA}$. $I_{AB}$ is a matrix acting in the tangent space of our group manifold, i.e. on the corresponding Lie algebra. The properties $I_{AB} = - I_{BA}$ and 
   $I^2 = -\mathbb{1}$ hold.  $I_{AB}$ is the same at all points of the manifold.         
        
\begin{defi}
A {\sl K\"ahler manifold} is a complex manifold with covariantly constant complex structure, $\nabla_M^{L.C.} I_{NQ} = 0$,
where $\nabla_M^{L.C.}$ is the ordinary Levi-Civita covariant derivative involving the Christoffel symbols.
 \end{defi}
 
 For a generic complex manifold, the complex structure is not covariantly constant in this sense, but it is covariantly constant with respect to an infinite set of torsionful connections. In particular,
 
 \begin{prop}
 The complex structure of a generic complex manifold is covariantly constant with respect to the {\sl Bismut connection} involving the structure \p{Bismut}. The totally antisymmetric torsion tensor is expressed as\footnote{The Bismut connection was introduced in  \cite{Bismut}. The beautiful formula \p{tors-Hull} was written in \cite{OP,Hull}.}
 \be
\lb{tors-Hull}
C_{MNP} \ =\ I_M{}^Q I_N{}^S I_P{}^R \, (\nabla_Q^{L.C.} I_{SR} +  \nabla_S^{L.C.} I_{RQ} +  \nabla^{L.C.}_R I_{QS} ) \,. 
 \ee
\end{prop}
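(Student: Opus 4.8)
The plan is to reduce the covariant-constancy condition to a linear equation for the torsion and then exhibit \p{tors-Hull} as its solution, with integrability entering only at the very last step. Writing the Bismut connection \p{Bismut} as $G^M_{NP} = \Gamma^M_{NP} + \tfrac12 g^{MK}C_{KNP}$ and imposing $\nabla^B_M I_N{}^P = 0$ for the resulting connection $\nabla^B$, the Christoffel part assembles into the Levi-Civita derivative while the torsion part contributes two contorsion terms, so that after lowering the free index the condition becomes
\[ \nabla^{L.C.}_M I_{NP} \ =\ \tfrac12 g^{QS} C_{SMN} I_{QP} - \tfrac12 C_{PMQ} I_N{}^Q \,. \]
I abbreviate $A_{MNP} := \nabla^{L.C.}_M I_{NP}$; since the Levi-Civita connection is metric compatible and $I_{NP}$ is antisymmetric, $A_{MNP}$ is antisymmetric in its last two indices. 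This single tensor equation, together with the requirement that $C$ be totally antisymmetric, is what must be solved.

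Second, I would observe that \p{tors-Hull} already produces an admissible torsion. Using $A_{QSR}=-A_{QRS}$, the cyclic sum in \p{tors-Hull} equals $3A_{[QSR]}$, so the ansatz reads $C_{MNP} = 3\,I_M{}^Q I_N{}^S I_P{}^R A_{[QSR]}$. The triple rotation by $I$ of a totally antisymmetric tensor is again totally antisymmetric, since swapping any two outer indices and relabelling the corresponding dummies flips the overall sign through $A_{[\dots]}$. Hence \emph{no} integrability is needed to guarantee that $C$ is a legitimate totally antisymmetric torsion, and the whole content of the Proposition reduces to checking that this particular $C$ satisfies the displayed equation.

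Third comes the verification. I would contract the displayed equation with $I_R{}^P$; using $I^2=-\mathbb{1}$ together with the compatibility $I_M{}^P I_N{}^Q g_{PQ}=g_{MN}$ that follows from properties (i),(ii) of Definition \ref{defi-almost}, one finds $I_{QP}I_R{}^P = g_{QR}$, so the first right-hand term collapses to $\tfrac12 C_{RMN}$ and the equation takes the form $2A_{MNP}I_R{}^P = C_{RMN} - C_{PMQ}I_R{}^P I_N{}^Q$. I would then insert the ansatz and reduce every factor using $I^2=-\mathbb{1}$ and the compatibility relation. What remains is a statement purely about $A$: its fully antisymmetric part reproduces itself automatically, while the non-skew part of $\nabla^{L.C.}I$ must cancel. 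That cancellation is exactly the vanishing of the Nijenhuis tensor \p{Nijen}: because the Levi-Civita connection is torsion-free, the partial derivatives in \p{Nijen} may be replaced by $\nabla^{L.C.}$, yielding the identity $\nabla^{L.C.}_{[M}I_{N]}{}^K = I_M{}^P I_N{}^Q \,\nabla^{L.C.}_{[P}I_{Q]}{}^K$, which is precisely the relation that eliminates the unwanted terms.

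The main obstacle I anticipate is this final cancellation: carefully bookkeeping the non-totally-antisymmetric components of $\nabla^{L.C.}I$ and showing that the Nijenhuis identity disposes of them is the only genuinely nontrivial step, and it is where the hypothesis that $I$ is integrable is indispensable. As an independent cross-check I would pass to complex coordinates, which exist precisely because the manifold is complex (Newlander--Nirenberg); there $I$ is constant and Hermitian, the condition $\nabla^B I=0$ forces the connection to preserve the holomorphic/antiholomorphic splitting, and integrability guarantees that the skew torsion \p{tors-Hull} carries only type $(2,1)+(1,2)$ components, the potential $(3,0)$ and $(0,3)$ obstructions being absent. This makes transparent why a single totally antisymmetric torsion can simultaneously preserve $g$ and $I$, and it confirms the formula.
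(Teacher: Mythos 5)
The paper does not actually prove this Proposition: it is stated as a known result and delegated to the references \cite{Bismut,OP,Hull}, so there is no in-text argument to compare yours against. On its own terms your outline is the standard and correct route. The reduction of $\nabla^B I=0$ to the linear equation $\nabla^{L.C.}_M I_{NP}=\tfrac12 g^{QS}C_{SMN}I_{QP}-\tfrac12 C_{PMQ}I_N{}^Q$ is right (up to the usual sign ambiguities in the placement of the torsion indices), the identification of the cyclic sum in \p{tors-Hull} with $3A_{[QSR]}$ using $A_{MNP}=-A_{MPN}$ is correct, and so is the observation that total antisymmetry of the ansatz needs no integrability while the verification of the equation does. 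You have also correctly located where integrability enters: the covariant form $\nabla^{L.C.}_{[M}I_{N]}{}^K=I_M{}^P I_N{}^Q\,\nabla^{L.C.}_{[P}I_{Q]}{}^K$ of \p{Nijen}, valid because the Levi-Civita connection is torsion-free, is exactly the identity that kills the non-skew part of $\nabla^{L.C.}I$. The one reservation is that this decisive cancellation is announced rather than carried out; as a completed proof it would need the explicit bookkeeping. Your complex-coordinate cross-check is in fact the cleanest way to close that gap: since $\omega=I_{MN}dx^M\wedge dx^N/2$ is of type $(1,1)$ and $d=\partial+\bar\partial$ on an integrable structure, $d\omega$ has no $(3,0)+(0,3)$ part, $\nabla^B I=0$ forces the mixed-type Christoffels to vanish, and metricity plus skewness then determine $C_{ij\bar k}=\partial_i h_{j\bar k}-\partial_j h_{i\bar k}$ uniquely, which is \p{tors-Hull}. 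Promoting that remark from a "cross-check" to the main argument would give a complete and shorter proof.
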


\begin{defi} A {\sl hyper-K\"ahler manifold} is a manifold admitting a triple of  covariantly constant complex structures,  $\nabla_M^{L.C.} I^{p=1,2,3}_{NQ} = 0$, that satisfy the quaternion algebra \p{quatern-alg}.
\end{defi}

\begin{defi}
An HKT manifold is a manifold admitting three integrable complex structures that satisfy the algebra \p{quatern-alg} and are covariantly constant with respect to the same universal torsionful Bismut connection.
\end{defi}

One can easily prove that the dimension of both hyper-K\"ahler and  HKT manifolds is a multiple integer of 4.

\vspace{1mm}

For the group theory considerations, we use the ``physical" convention where the generators $t_A$ are Hermitian. We use the Cartan-Weyl basis where the set of all $t_A$ is subdivided into
 the orthonormal basis $t_{a = 1,\ldots, r}$ in  the Cartan subalgebra  (CSA)  $H$, the positive {\it root vectors} $E_{\alpha_j}$ and the negative root vectors $E_{-\alpha_j}$ that are Hermitially conjugate to $E_{\alpha_j}$. Then for any $h \in H$, the commutation relations
  \be
  [h, E_{\pm \alpha_j}] = \pm \alpha_j(h) E_{\pm \alpha_j}
   \ee
   hold. Here $\alpha_j (h)$ are linear forms on $H$ called the  {\it roots}.  We define also the {\it coroots} $\alpha_j^\vee$ as the elements of $H$ satisfying $\alpha_j(\alpha_k^\vee) = 2\delta_{jk}$. The coroots also satisfy the following noteworthy property:
   \be 
 \lb{coroot-norm}
\omega  =  e^{2i\pi \alpha_j^\vee} \ = \ \mathbb{1}, \ \ {\rm but} \ \ \qquad e^{i\phi \alpha_j^\vee} \ \neq \ \mathbb{1} \ \ {\rm if} \ \ \phi < 2\pi .
  \ee
    For any positive root $\alpha$ the  
 commutator $[E_\alpha, E_{-\alpha}]$ is proportional to $\alpha^\vee$.  We choose the {\it Chevalley normalization} of the root vectors where 
  \be
     \lb{E-norm}
[E_\alpha, E_{-\alpha}] \ =\ \alpha^\vee \,. 
 \ee  
  For any two positive or negative roots $\alpha, \beta$ with $\alpha + \beta \neq 0$, the commutator $[E_\alpha, E_\beta]$ is proportional to $E_{\alpha+\beta}$ if the root $\alpha+\beta$ exists. Otherwise, this commutator vanishes. In the Chevalley normalization, the following nice relation holds \cite{Bourbaki}:
   \be
   \lb{Bourbaki}
  [E_{\alpha}, E_{\beta}] \ =\ \pm (q+1) E_{\alpha + \beta} \, ,
   \ee
   where $q$ is the greatest positive integer such that $\alpha - q\beta$ is a root.

\section{Samelson theorem}
\setcounter{equation}0
We choose a matrix representation of the group and parametrize the group elements as
 \be
 \omega  \ =\ \exp\{it_M x^M\} \,,
  \ee
  where $t_M$ are the generators satisfying\footnote{C = 1/2 in the defining representation of $SU(N)$.} Tr$\{t_M t_N\} = C \delta_{MN}$  and $x^M$ are the coordinates of the group manifold. We endow the manifold with the Killing metric
   \be
   \lb{Killing}
   g_{MN} \ =\  \frac 1C {\rm Tr} \{  \partial_M \omega \, \partial_N \omega^{-1} \} \,.
    \ee 
    Then $ds^2= g_{MN} dx^M dx^N$ is invariant under multiplication of $\omega$ by any group element on the left or on the right. In the vicinity of unity, $x^M \ll 1$, the metric \p{Killing} acquires the form
     \be
     \lb{g-okrest}
     g_{MN} \ =\ \delta_{MN} - \frac 1{12} f_{MPQ} f_{NPR}\, x^Q x^R + o(x^2)\,,
      \ee
      where $f_{MPQ}$ are the structure constants. 
      The Samelson theorem says:
      \begin{prop} \cite{Samelson}
 Any even-dimensional group manifold is complex.
  \end{prop}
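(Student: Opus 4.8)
The plan is to build a left-invariant complex structure directly on the Lie algebra $\mathfrak{g}$ in the Cartan-Weyl basis and then prove it is integrable. Since the manifold has dimension $r + 2N$, with $r$ the rank and $N$ the number of positive roots, evenness of the total dimension forces the rank to be even, $r = 2s$ (as $2N$ is automatically even). First I would pair the orthonormal Cartan generators as $(t_1,t_2),\ldots,(t_{2s-1},t_{2s})$ and define $I$ on the tangent space at the identity by $I\,t_{2k-1} = t_{2k}$, $I\,t_{2k} = -t_{2k-1}$ on the CSA, together with $I\,E_{\pm\alpha} = \pm i\,E_{\pm\alpha}$ on every root vector. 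As the excerpt already records in \p{IMN}, this constant matrix $I_{AB}$ is then transported to every point of the group by left translation, so that it stays constant in the vielbein frame.

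The two algebraic conditions of Definition \ref{defi-almost} follow at once from this block structure. On each root plane the Hermitian real generators $X_\alpha \propto E_\alpha + E_{-\alpha}$ and $Y_\alpha \propto i(E_\alpha - E_{-\alpha})$ are rotated into one another, so in the orthonormal frame $I$ acts there as the block $\bigl(\begin{smallmatrix} 0 & 1 \\ -1 & 0\end{smallmatrix}\bigr)$, and likewise on each Cartan pair; hence $I$ is orthogonal, antisymmetric ($I_{AB} = -I_{BA}$), and satisfies $I^2 = -\mathbb{1}$. This is precisely the point at which even dimensionality enters: an odd-dimensional CSA could not be split into such pairs.

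The essential step is integrability, i.e. the vanishing of the Nijenhuis tensor \p{Nijen}. Rather than attack the tensorial expression head on, I would use the standard reformulation: because $I$ is left-invariant, the antisymmetrized derivatives $\partial_{[M} I_{N]}{}^K$ re-express through the structure constants $f_{ABC}$, and the vanishing of ${\cal N}_{MN}{}^K$ becomes the purely algebraic statement that the holomorphic $(+i)$-eigenspace $\mathfrak{g}^{1,0}$ of $I$ is closed under the Lie bracket. Here $\mathfrak{g}^{1,0}$ is spanned by all positive root vectors $E_\alpha$ together with the $s$ combinations $t_{2k-1} - i\,t_{2k}$. Closure is then checked on the three bracket types: $[h,h']=0$ because the CSA is abelian; $[h,E_\alpha] \propto E_\alpha$ stays in the eigenspace; and for two positive roots $[E_\alpha,E_\beta] \propto E_{\alpha+\beta}$ by \p{Bourbaki} whenever $\alpha+\beta$ is a root. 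The decisive fact is that a sum of two positive roots is never zero and, if it is a root at all, is again positive, so this bracket lands back in the span of positive root vectors. Every bracket therefore returns to $\mathfrak{g}^{1,0}$, and the structure is integrable.

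I expect the main obstacle to lie not in this algebra but in the first move of the integrability argument: translating the coordinate-space tensor \p{Nijen} into the bracket-closure criterion on $\mathfrak{g}^{1,0}$. This requires expressing the derivatives of the left-invariant frame through the structure constants and verifying that the Killing metric \p{Killing} is compatible, so that raising the index $K$ does not disturb the eigenspace identification. Once this dictionary is established the remaining content is light, resting entirely on the closure of positive roots under addition; the complete freedom in choosing \emph{any} complex structure on the abelian CSA moreover shows that the resulting complex structure is highly non-unique, a feature that will matter when the quaternionic triple is assembled via automorphisms.
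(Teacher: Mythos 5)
Your construction of $I$ is the same as the paper's (eqs.\ \p{act-I-korni}, \p{I-CSA}, up to an overall sign convention on the root vectors), but your integrability argument takes a genuinely different---and in fact the classical, Samelson-style---route. The paper pushes the Nijenhuis condition \p{Nijen} into the real orthonormal frame, obtains the algebraic identity \p{f+3fI=0} for the structure constants, and verifies that identity case by case; the three-distinct-roots case needs a small combinatorial trick (choosing the pair $(\alpha,\beta)$ so that $[E_\alpha,E_\beta]$ has no projection on $E_\gamma$, leading to \p{relat-f}). You instead invoke the equivalence, for an invariant almost complex structure, of ${\cal N}=0$ with closure of the $(+i)$-eigenspace $\mathfrak{g}^{1,0}\subset\mathfrak{g}\otimes\mathbb{C}$ under the bracket, which reduces everything to the fact that the positive root vectors together with half of the complexified CSA span a Borel-type subalgebra; the case analysis then becomes trivial. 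Both routes rest on the same translation step (the tensor \p{Nijen} evaluated on an invariant frame becomes a statement about structure constants), which you correctly flag but do not carry out in any more detail than the paper does. What the paper's formulation buys is the explicit real identity \p{f+3fI=0}, which is exactly the form reused in Section 4 to show that automorphisms map admissible complex structures to admissible ones; your eigenspace criterion would serve the same purpose, since an automorphism maps a closed subalgebra to a closed subalgebra, so nothing essential is lost. One small remark: the metric plays no role in \p{Nijen} (the index placement $I_N{}^K$ is already the mixed one), so your worry about raising the index $K$ is moot; compatibility with the Killing metric is only needed for the antisymmetry $I_{MN}=-I_{NM}$, which your orthogonal-block argument already supplies.
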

  
  \begin{proof}
   To prove the theorem, we should define an almost complex structure $I_{MN}$ and show that the Nijenhuis tensor for this structure vanishes. We take care in this definition that the components of the tensor $I_{MN}$ in the different points of the manifold are related to each other by the coordinate transformations generated by, say, a right group multiplication $\omega \to \omega V$. For the close points, this gives
    \be
   \lb{IMN-close}  
   I_{MN}(x) \ =\ I_{MN}(0) + \frac 12 I_M{}^Q(0) f_{NQP}\,x^P -  \frac 12 I_N{}^Q(0) f_{MQP} \, x^P  + o(x) \,.
    \ee
    The relation \p{IMN-close} can be alternatively written as
    $ I_{MN} \ =\ e_{MA} e_{NB} I_{AB}$ \,, 
     where 
     \be
     \lb{vielbein-group}
     e_{MA} \ =\ \delta_{MA} + \frac 12 f_{MAP}\, x^P - \frac 16 f_{AMR} f_{ANQ} \, x^R x^Q  \ + o(x^2)
     \ee 
     are the vielbeins satisfying $e_{MA} e_{NA} = g_{MN}$ and $I_{AB} \equiv I_{MN}(0)$ is the tangent space projection of the complex structure, the same at all the points.
     
       The complex structure \p{IMN-close} is covariantly constant with respect to the Bismut connection with the torsion tensor 
    \be
    \lb{C=f}
    C_{MNP} = f_{MNP}\,.
    \ee
     Indeed,
    it is straightforward to see that at the origin $x=0$,
    \be
 \lb{Bis-I-unity}
 \nabla_P^{(B)} I_{MN} =  \partial_P I_{MN} - \frac 12 f_{QPM} I^Q{}_N - \frac 12 f_{QPN} I_M{}^Q  =  0\,,
  \ee
  where we neglected the contribution of the ordinary Christoffel symbols $\Gamma^Q_{PM}$, which are of order $O(x)$.  Note that the torsion tensor \p{C=f} is invariant under group rotations, like the metric is, and does not depend on $x$.
       
        Let us now substitute the complex structure \p{IMN} with the vielbeins \p{vielbein-group} in the integrability condition \p{Nijen} and consider there only leading terms  $\propto O(1)$. After some simple transformations, we arrive at the identity
       \be
       \lb{f+3fI=0}
       f_{ABC} -  I_{AD} I_{BE}\, f_{DEC} -    I_{BD} I_{CE} \, f_{DEA} - 
        I_{CD} I_{AE}\, f_{DEB} \ =\ 0 \, .
         \ee
         This was derived by considering the condition \p{Nijen} at the vicinity of one particular point of the manifold with $\omega = \mathbb{1}$. But, bearing in mind the fact that $I_{AB}$ is the same everywhere and the isometry
 $G_L \times G_R$ of the metric \p{Killing}, so that any point of the group manifold can be brought to unity (where $g_{MN} = \delta_{MN}$ and $e_{AM} = \delta_{AM}$) by a group rotation, the identity \p{f+3fI=0} implies the fulfillment of the condition \p{Nijen} at all points.

  Thus, we have to find an antisymmetric matrix $I$ that squares to
   $ -\mathbb{1}$ and   satisfies  the condition \p{f+3fI=0}. The geometric problem is reduced to a pure group theory one!
  
   The tangent space of a group manifold is the Lie algebra of the corresponding group. The matrix $I_{AB}$ can be interpreted as a linear operator acting on the generators $t_A$. We define the action of $I$ on the root vectors as 
  \be
  \lb{act-I-korni}
  \hat I E_\alpha \ =\ -i E_\alpha, \qquad 
  \hat I E_{-\alpha}  \ =\ i E_{-\alpha} \,.
   \ee
        Each root vector can be represented as $E_{\pm \alpha} = t_A \pm  
   it_{A^*}$ with Hermitian $t_A$ and $t_{A^*}$. For example, the algebra $A_2 \equiv su(3)$  involves three positive root vectors:
   \be
   \lb{E-su3}
   E_\alpha &=& \left(\begin{array}{ccc} 0 & 1 & 0 \\ 0& 0&0 \\ 0&0&0 \end{array} \right) \ = \  t_1 + it_2\, , \nn
    E_\beta &=& \left(\begin{array}{ccc} 0 & 0 & 0 \\ 0& 0&1 \\ 0&0&0 \end{array} \right) \ = \  t_6 + it_7\, , \nn
     E_{\alpha+\beta} &=& \left(\begin{array}{ccc} 0 & 0 & 1 \\ 0& 0&0 \\ 0&0&0 \end{array} \right) \ = \  t_4 + it_5\,.
      \ee
     In these terms,  
   \be
   \lb{I-korni}
    \hat I t_A = t_{A^*}, \  \hat I t_{A^*} = - t_{A} \quad {\rm or\ else} \quad I_{A^* A}   \ =\  - I_{AA^*} \ =\ 1 \,.
    \ee

   If the group is even-dimensional, its Cartan subalgebra is even-dimensional.
   Choose there the orthonormal basis $t_a$, order the set $\{t_a\}$ in an arbitrary way and define  
     \be
  \lb{I-CSA}
  \hat I  t_{a_1} = t_{a_2}, \quad  \hat I  t_{a_2} = -t_{a_1}, \qquad  \hat I  t_{a_3} = t_{a_4}, \quad  \hat I  t_{a_4} = -t_{a_3}\,, \quad {\rm etc.}
    \ee
  It is clear that the matrix $I$ thus defined is antisymmetric and squares to $-\mathbb{1}$.  Let us prove that it satisfies the condition \p{f+3fI=0}.
    
    \begin{enumerate}
       
   \item Consider the L.H.S. of \p{f+3fI=0} with an arbitrary $C$ and $A,B$ associated with the same root vector: $A = A, B=A^*$. Bearing in mind that $I_{AA^*} 
I_{A^*A} = -1$, it is easy to see that the first term in \p{f+3fI=0} cancels the second one, while  the third and the fourth term vanish.

\item Let now $A$ and $B$ be associated with different root vectors $E_\alpha$ and $E_\beta$, with $t_C\equiv h$ belonging to the Cartan subalgebra. The commutator $[h, E_\alpha]$ is proportional to $E_\alpha$ and hence $f_{ABC} = 0$. In this case, all the terms in \p{f+3fI=0} vanish.

\item A somewhat less trivial case is when $A,B,C$ are associated with three different roots $\alpha,\beta,\gamma$. Note that for any such triple,  one can find a couple $(\alpha,\beta)$ such that the commutator $[E_\alpha, E_\beta]$ has no projection on $E_\gamma$. Indeed, all the nonzero commutators  of the positive root vectors have the form $[E_\alpha, E_\beta] = CE_{\alpha+\beta}$. Consider  then a couple $(\alpha, \alpha +\beta)$. The commutator  $[E_\alpha, E_{\alpha+\beta}]$ can only be equal to $E_{2\alpha + \beta}$ (if the root $2\alpha + \beta$ exists) and does not have a projection on $E_\beta$.

Let $(\alpha,\beta,\gamma)$ be such a triple. Then
$
[t_A + i t_{A^*}, t_B + i t_{B^*}]$  has no projection on  $t_C + i t_{C^*}$.
 It follows that 
\be
\lb{relat-f}
f_{ABC^*} - f_{A^*B^*C^*} \ =\  f_{A^*BC} + f_{AB^*C} \ = 0
 \ee
 (note also that the structures like $f_{ABC}$ or $f_{AB^*C^*}$ vanish). 
  Bearing  this in mind, it is easy to see that the relation \p{f+3fI=0} holds for all star attributions. For example, for $\{ABC\} \to \{ABC^*\}$, we deduce 
   \be
   \lb{chetyre-f}
   f_{ABC^*} + f_{B^*A^*C^*} - f_{CB^*A} - f_{A^*CB} \ =\ 0 \,.
    \ee
        
     \end{enumerate}

\end{proof}

\section{Quaternion triples}
\setcounter{equation}0

Our goal is to prove that certain group manifolds of dimension $4n$ are HKT, which is tantamount to say that certain groups of dimension $4n$ admit quaternion triples of the matrices $I,J,K$ that satisfy the identity  \p{f+3fI=0}.
Indeed, it follows from the consideration above that the corresponding complex
structures are integrable. In addition, the torsion tensors for each such structure are given by \p{C=f} and coincide.

The basic observation is 
\begin{prop} 
   Let $\Omega$ be an automorphism of the algebra $\mathfrak{g}$.
    Let $I$ be an antisymmetric matrix that squares to $-\mathbb{1}$ and satisfies \p{f+3fI=0}. 
    Then 
  \be
  \lb{J=OmegaI}
  J_{AB} = (\Omega I \Omega^T)_{AB}
  \ee
  has the same properties.
     \end{prop}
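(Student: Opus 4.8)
The plan is to verify the three defining properties of $J$ in turn, exploiting the fact that, for a compact algebra endowed with the Killing metric \p{Killing}, the automorphism $\Omega$ may be taken orthogonal. Indeed, any automorphism preserves the Killing form, so in the orthonormal basis of generators it is represented by a matrix with $\Omega^T \Omega = \Omega\Omega^T = \mathbb{1}$; in particular $\Omega^T = \Omega^{-1}$, which is exactly why the conjugation in \p{J=OmegaI} is written with $\Omega^T$. Orthogonality makes the first two properties immediate: $J^T = \Omega I^T \Omega^T = -\Omega I \Omega^T = -J$, so $J$ is antisymmetric, and $J^2 = \Omega I (\Omega^T\Omega) I \Omega^T = \Omega I^2 \Omega^T = -\Omega\Omega^T = -\mathbb{1}$.

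The real content is the invariance of the cubic identity \p{f+3fI=0}, and the single structural ingredient I would use is that the totally antisymmetric structure constants are invariant under the automorphism, $f_{ABC} = \Omega_{AP}\Omega_{BQ}\Omega_{CR}\, f_{PQR}$ (this is just the statement that $\Omega$ preserves the bracket, with all indices lowered by the Killing metric). Combining this with orthogonality yields the convenient two-leg relation
\[
\Omega_{DQ}\,\Omega_{ES}\, f_{DEC} \ = \ \Omega_{CT}\, f_{QST}\,,
\]
i.e. contracting two legs of $f$ with $\Omega$ is the same as rotating the third leg back by $\Omega$. I would then substitute $J_{AB} = \Omega_{AP} I_{PQ} \Omega_{BQ}$ into the left-hand side of \p{f+3fI=0} and apply this relation termwise to strip the vielbein-like factors off both the $I$'s and the $f$ simultaneously.

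The key step is the bookkeeping showing that the three cubic terms assemble into a single rotated copy of the $I$-identity. Writing $T_{PRT} := I_{PQ} I_{RS} f_{QST}$ for the basic building block, each of the three terms of \p{f+3fI=0} for $J$ reduces, after the two-leg relation is applied, to $\Omega_{A\bullet}\Omega_{B\bullet}\Omega_{C\bullet}$ contracted with a cyclic relabelling of $T$, so that
\[
J_{AD}J_{BE}f_{DEC} + J_{BD}J_{CE}f_{DEA} + J_{CD}J_{AE}f_{DEB} \ = \ \Omega_{AX}\Omega_{BY}\Omega_{CZ}\,(T_{XYZ} + T_{YZX} + T_{ZXY})\,,
\]
where the common prefactor $\Omega_{AX}\Omega_{BY}\Omega_{CZ}$ is the payoff of the cyclic structure. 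The parenthesis is precisely the right-hand side of the $I$-identity \p{f+3fI=0} with free indices $X,Y,Z$, hence equals $f_{XYZ}$; one final use of the invariance of the structure constants turns $\Omega_{AX}\Omega_{BY}\Omega_{CZ} f_{XYZ}$ back into $f_{ABC}$, which is exactly the claim.

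I expect the only genuine obstacle to be the index gymnastics of the third paragraph: one must track carefully which leg of each $I$ and of $f$ carries the free index, so that all three terms emerge with the \emph{same} external factor $\Omega_{AX}\Omega_{BY}\Omega_{CZ}$ and the inner tensors differ only by a cyclic shift. Everything else---the orthogonality of $\Omega$, the invariance of $f$, and the collapse of the $I$-identity to $f_{XYZ}$---is structural and requires no computation beyond these substitutions.
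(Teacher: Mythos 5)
Your proof is correct and takes essentially the same route as the paper: orthogonality of $\Omega$ (as an isometry of the Killing form) gives $J^T=-J$ and $J^2=-\mathbb{1}$ at once, and the invariance $f_{ABC}=\Omega_{AP}\Omega_{BQ}\Omega_{CR}f_{PQR}$ yields the cubic identity \p{f+3fI=0} for $J$. The paper dismisses that last step as ``immediately follows from the invariance of $f_{ABC}$''; your two-leg relation and the cyclic assembly of the three terms are just the explicit bookkeeping behind that remark, and they check out.
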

    \begin{proof}
   An automorphism of $\mathfrak g$ is an orthogonal matrix $\Omega_{AB}$ satisfying the condition 
    \be
    \Omega_{AD} \Omega_{BE} \Omega_{CF} f_{DEF} = f_{ABC}\,.
     \ee
     Then clearly $J^T = -J$ and $J^2 = -\mathbb{1}$. The fulfillment of \p{f+3fI=0} for the matrix \p{J=OmegaI}
    also immediately follows from the invariance of $ f_{ABC}$.
    \end{proof}
Let $I$ be a matrix \p{I-korni}, \p{I-CSA}. To prove the assertion, we have to find automorphisms that make $J$ and $K$ out of  $I$ in such a way that the triple $I,J,K$ is quaternionic. 
Before doing so in a generic case, we consider several examples.

\subsection{$SU(2) \times U(1)$}

Another name for the manifold $SU(2) \times U(1) \equiv S^3 \times S^1$ is the {\it Hopf manifold} \cite{Hopf}. We will prove that this manifold is HKT.

The $su(2)$ algebra has only one positive root vector $E_+ = t_1 +it_2$. The Cartan subalgebra of $su(2) \oplus u(1)$ includes the generator $t_3$ of $su(2)$ and $t_0$ of $u(1)$. The canonical complex structure $\EuScript{I}$ is the matrix

\be
\lb{block-I}
 {\EuScript I} \ =\ \begin{array}{c} \mbox{\scriptsize 1}\\ \mbox{\scriptsize 2}\\ \mbox{\scriptsize 3}\\ \mbox{\scriptsize 0} \end{array} \quad \left( \begin{array}{cccc} 0&-1&0&0 \\ 1&0&0&0 \\ 0&0&0&-1 \\ 0&0 & 1&0 \end{array} \right)\,. 
    \ee
    The first two lines in $\EuScript{I}$ follow from the definition \p{act-I-korni}. The third and the fourth line correspond to the sign convention $\hat {\EuScript{I}} t_3 = t_0, \hat {\EuScript{I}} t_0 = -t_3$ that we adopt.
    
Consider the automorphism 
\be
\lb{auto-su2}
\Omega: \quad t_{1,0} \to t_{1,0}, \quad t_2 \to t_3 \to -t_2 \,.
 \ee
 It brings the matrix $\EuScript{I}$ to
 \be
 \lb{block-J}
 \EuScript{J} \ =\ \Omega \EuScript{I} \Omega^T = 
 \begin{array}{c} \mbox{\scriptsize 1}\\ \mbox{\scriptsize 2}\\ \mbox{\scriptsize 3}\\ \mbox{\scriptsize 0} \end{array} \quad \left( \begin{array}{cccc} 0&0&-1&0 \\ 0&0&0&1 \\ 1&0&0&0\\ 0&-1 & 0&0 \end{array} \right)\,. 
 \ee
  The third matrix is 
  \be
  \lb{block-K}
  \EuScript{K} \ = \ \EuScript{I} \EuScript{J} \ =\ 
   \begin{array}{c} \mbox{\scriptsize 1}\\ \mbox{\scriptsize 2}\\ \mbox{\scriptsize 3}\\ \mbox{\scriptsize 0} \end{array} \quad \left( \begin{array}{cccc} 0&0&0&-1 \\ 0&0&-1&0 \\ 0&1&0&0\\ 1&0&0&0 \end{array} \right)\,.
  \ee 
  One can be easily convinced that these matrices anticommute and are quaternionic. One can further notice that all of them are self-dual,
  $\EuScript{I}_{AB} = \frac 12 \varepsilon_{ABCD} \EuScript{I}_{CD}$ with the convention
  $\varepsilon_{1230} = 1$, and the same for $\EuScript{J}$ and $\EuScript{K}$. (They would be anti-self-dual if the opposite sign convention in the last two lines of \p{block-I} were chosen.) A physicist would recognize in these matrices
  the so-called {\it `t Hooft symbols} \cite{Hooft}. 
  
  The automorphism \p{auto-su2} can be represented as $t_a \to U^\dagger t_a U$ with
  \be
  \lb{IJ-auto}
  U \ =\ \frac 1 {\sqrt{2}} \left( \begin{array}{cc} 1 & i \\ i & 1 \end{array}\right)  \ =\ \exp\left\{\frac {i\pi}2 t_1 \right\} \ =\  \exp\left\{\frac {i\pi}4 (E_+ + E_-) \right\}  \, .
   \ee
     If one chooses
     \be
  \lb{IK-auto}
  U \ =\ \frac 1 {\sqrt{2}} \left( \begin{array}{cc} 1 & 1 \\ -1 & 1 \end{array}\right)  \ =\ \exp\left\{\frac {i\pi}2 t_2 \right\} \ =\  \exp\left\{\frac {\pi}4 (E_+ - E_-) \right\}  \, ,
   \ee
   one arrives to the automorphism $\tilde{\Omega}: \quad t_{2,0} \to t_{2,0}, \ t_1 \to -t_3 \to t_1$.
   It transforms the complex structure $\EuScript{I}$ to $\tilde \Omega \EuScript{I} \tilde \Omega^T = \EuScript{K}$.

  \subsection{$SU(3)$}
  
  This is the next in complexity case. We start with constructing the complex structure $I$. The definition \p{act-I-korni} gives the matrix elements
  \be
  I_{21} = I_{54} =  I_{76} = -  I_{12} =  - I_{45} = - I_{67} \ =\ 1\,.
   \ee
   In the Cartan subalgebra, we choose the basis\footnote{This is {\it not} the standard choice adopted in the physical community. Our $t_3$ is one half of the coroot $\alpha^\vee +\beta^\vee$  rather than $\alpha^\vee/2$, as usual.} 
    \be
    \lb{basis}
   t_3 = \frac 12\,{\rm diag}(1,0,-1), \qquad t_8 = \frac 1{2\sqrt{3}}\,
  {\rm diag}(1,-2,1) \,.
   \ee
    We then define $I_{83} = - I_{38} = 1$.
  
  We see that the $8 \times 8$ matrix $I$  splits into two blocks:
  \begin{enumerate}
  \item The block in the subspace (4,5,3,8) that corresponds to
   the subalgebra $su(2) \oplus u(1)$ of $su(3)$, with $su(2)$ associated with the highest root $\alpha+\beta$.
   \item The block in the subspace (1,2,6,7) acting on the root vectors $E_{\pm \alpha}, E_{\pm \beta}$. 
  \end{enumerate}
  Each block has the form \p{block-I}: $I \ =\ {\rm diag}(\EuScript{I}, \EuScript{I} )$.
  
  To find the second complex structure, we calculate $\Omega I \Omega^T$ with the automorphism $\Omega: \ t_a \to U^\dagger t_a U$, where
  \be
  \lb{u-su3}
  U \ =\ \exp \left\{ \frac {i\pi}4 (E_{\alpha+\beta} + E_{-\alpha-\beta}) \right\} = \exp \left\{ \frac {i\pi}2 t_4 \right\} \ =\ 
  \frac 1{\sqrt{2}} \left( \begin{array}{ccc} 1& 0 &i \\0 &\sqrt{2} & 0 \\ i & 0 & 1 \end{array} \right) \,.
   \ee
   The generators from the subalgebra $su(2) \oplus u(1)$ transform in the same way as in \p{auto-su2}:
   $t_{4,8} \to t_{4,8},  \ t_5 \to t_3 \to -t_5$. To find the transformations of the root vectors 
   $E_{\pm \alpha}, E_{\pm \beta}$, we use the Hadamard formula:
   \be
   \lb{Hadamard}
   e^R X e^{-R} \ =\ X + [R,X] + \frac 12 [R, [R,X]] + \frac 16 [R,  [R, [R,X]]] + \ldots
     \ee
     with $R = -i\pi/4 (E_{\alpha+\beta} + E_{-\alpha - \beta})$. The nontrivial commutators are\footnote{Note that the commutators like $[E_{\alpha+\beta}, E_\alpha]$ vanish due to the fact that $\alpha + \beta$ is the highest root. That is why our choice  (actually, the choice of Ref. \cite{Spindel}) of the subalgebra $su(2) \oplus u(1) \subset su(3)$ is more clever than the other choices. In fact, for $SU(3)$, one could also choose it in the ordinary way in association with the root $\alpha$ rather than with $\alpha+\beta$. One could thus construct a quaternion triple of the complex structures, but this method is not universal and cannot be easily generalized to an arbitrary group, which is our goal.}
     \be
     \lb{comm-RE}
     [R, E_\alpha] \ =\ - \frac {i\pi} 4 E_{-\beta}, &\qquad& [R, E_\beta]   \ =\  \frac {i\pi} 4 E_{-\alpha}, \nonumber \\
   \, [R, E_{-\alpha}] \ = \  \frac {i\pi} 4 E_{\beta}, &\qquad& [R, E_{-\beta}]   \ =\ -\frac {i\pi} 4  E_{\alpha} 
      \ee
     We derive
     \be
     \lb{transE-SU3}
     E_\alpha  \ \stackrel \Omega  \to\   \frac 1{\sqrt{2}} (E_\alpha - i E_{-\beta}), &\qquad& E_\beta  \ \stackrel \Omega\to  \ \frac 1{\sqrt{2}} (E_\beta + i E_{-\alpha}), \nn
      E_{-\alpha}  \ \stackrel \Omega \to \ \frac 1{\sqrt{2}} (E_{-\alpha} + i E_{\beta}), &\qquad& E_{-\beta}  \ \stackrel \Omega\to  \ \frac 1{\sqrt{2}} (E_{-\beta} - i E_{\alpha}) \,.
       \ee

As was also the case for $I$, the complex structure $J$  splits into two blocks. The first block describes the action of $J$ in the subspace (4,5,3,8). By construction, it coincides with \p{block-J}.   The second block describes the action of $J$ in the subspace (1,2,6,7). Bearing in mind \p{act-I-korni} and \p{transE-SU3}, we derive
 \be
 \hat J E_\alpha =  -E_{-\beta}, \quad  \hat J E_\beta =   E_{-\alpha}, \quad  \hat J E_{-\alpha} =  -E_{\beta}, \quad
  \hat J E_{-\beta} = E_\alpha
   \ee
   or 
   \be
   \lb{act-J-korni}
   \hat J t_1 = -t_6, \quad     \hat J t_2 = t_7, \quad  \hat J t_6 = t_1,  \quad \hat J t_7 = -t_2 \,.
    \ee
    This  gives the matrix  that differs from \p{block-J} by the sign. In other words, $J = {\rm diag} (\EuScript{J}, -\EuScript{J})$. The third complex structure is  $K = {\rm diag} (\EuScript{K}, -\EuScript{K})$.
    
    \subsection{Higher unitary groups}
    
 Consider first $SU(4) \times U(1)$.  The positive roots of $SU(4)$ are schematically shown below: 
      \be
 \lb{roots-SU4}
 \left(  \begin{array}{cccc}  * & \alpha & \alpha+\beta & \alpha+\beta+\gamma \\
 *&*& \beta & \beta+\gamma \\
 *&*&*& \gamma \\
 *&*&*&* 
                             \end{array} \right)\,.
       \ee
  The action of the complex structure $I$ on the root vectors is defined in \p{act-I-korni}. We choose the basis of the Cartan subalgebra of $SU(4) \times U(1)$ as follows:
 \be
 \lb{CSA-SU4}
 t_{\rm out} = \frac 12 (\alpha+\beta+\gamma)^\vee \ =\  \frac 12 {\rm diag} (1,0,0,-1), &\qquad&  t_{\rm in} = \frac 12 \beta^\vee  = \frac 12 {\rm diag} (0,1,-1,0), \nn
 t_{15} = \frac 1{2\sqrt{2}}  {\rm diag} (1,-1,-1,1), &\qquad&  t_0 = \frac 1{2\sqrt{2}}  {\rm diag} (1,1,1,1)\,.
  \ee
  We then define\footnote{One could interchange $t_{15}$ and $t_0$  in \p{act-I-CSA-SU4}---it does not matter so much. Also the signs in \p{act-I-CSA-SU4} could be chosen differently, but we are trying to establish the universal recipe that will work for all groups.}
  \be
  \lb{act-I-CSA-SU4}
  \hat I t_{\rm out} = -t_{15}, \quad  \hat I t_{15} = t_{\rm out}, \quad  \hat I t_{\rm in} = -t_{0}, \quad \hat I t_0 = t_{\rm in}\,.
   \ee
   The matrix $I$ has a block-diagonal form. We distinguish four $4 \times 4$ blocks: {\it (I)} the outer block including the generators
   $ E_{\pm(\alpha + \beta +\gamma)},  t_{\rm out}$ and $t_{15}$; {\it (II)} the internal block    including the generators
   $ E_{\pm\beta}, t_{\rm in}$ and $t_0$; {\it (III)} the block including $E_{\pm \alpha}$ and $ E_{\pm(\beta + \gamma)}$ and {\it (IV)} the block including  $E_{\pm(\alpha+\beta)}$ and  $E_{\pm\gamma}$. All these blocks have the form \p{block-I}.
 
  In the analogy with the previous examples, we are trying to construct the second complex structure as $\tilde J = \Omega I \Omega^T$ where 
  $\Omega$ is the automorphism $t_a \to U_{\rm out}^\dagger t_a U_{\rm out}$ with 
   \be
   \lb{Uout}
   U_{\rm out} \ =\ \exp \left\{ \frac {i\pi}4 (E_{\alpha+\beta+\gamma} + E_{-\alpha-\beta-\gamma}) \right\}\,.
    \ee
    One immediately sees that this gives the structure \p{block-J} in the  ``outer" sector {\it (I)}. 
    
    The nonzero commutators involving
    $E_{\pm(\alpha+\beta+\gamma)}$ are 
    \be
    \lb{commE}
    [E_{\pm (\alpha+ \beta+\gamma)}, E_{\mp \alpha}] =  \ \mp E_{\pm (\beta+\gamma)}, \qquad 
    [E_{\pm (\alpha+ \beta+\gamma)}, E_{\mp (\beta+\gamma)}] =  \ \pm E_{\pm \alpha}, \nn
    \, [E_{\pm (\alpha+ \beta+\gamma)} , E_{\mp \gamma}] =  \ \pm E_{\pm (\alpha+\beta)}, \qquad 
     [E_{\pm (\alpha+ \beta+\gamma)}, E_{\mp (\alpha+ \beta)}] =  \ \mp E_{\pm \gamma}\,.
      \ee
    Then the action of $\hat {\tilde J}$ on the root vectors $E_{\pm \alpha},  E_{\pm (\beta+\gamma)}, E_{\pm (\alpha+ \beta)},
    E_{\pm \gamma}$ reads
    \be
    \lb{act-tilde-J}
 \hat {\tilde J} E_\alpha =  -E_{-\beta-\gamma}, \quad  \hat {\tilde J} E_{\beta+ \gamma} =  E_{-\alpha}, \quad  \hat {\tilde J} E_{-\alpha} =  -E_{\beta+\gamma}, \quad
  \hat {\tilde J} E_{-\beta-\gamma} = E_\alpha \,, \nn
  \hat {\tilde J} E_{\alpha+\beta} =  -E_{-\gamma}, \quad  \hat {\tilde J} E_{\gamma} =  E_{-\alpha-\beta}, \quad  \hat {\tilde J} E_{-\alpha-\beta} =  -E_{\gamma}, \quad
  \hat {\tilde J} E_{-\gamma} = E_{\alpha+\beta} \,. 
   \ee
   We see that $\hat {\tilde J}$ does not mix the sectors {\it (III)} and {\it (IV)} neither with the sectors {\it (I)}, {\it (II)}, nor with each other, and that the matrix $\tilde J$ in these sectors has the same form as 
   in \p{act-J-korni}, coinciding up to the sign with \p{block-J}.
   
   However, the internal block {\it (II)} of the matrix $\tilde J$ is not affected by the automorphism \p{Uout}. Indeed, its generators   $ E_{\pm\beta}, t_{\rm in}, t_0$ {\it commute} with $E_{\pm(\alpha+\beta+\gamma)}$, they belong to the {\sl centralizer} of $E_{\pm(\alpha+\beta+\gamma)}$   in $su(4)$. As a result, the internal block in $\tilde J$  coincides with that in $I$. Thus, the matrix $\tilde J$ does not anticommute with $I$ and is not a suitable choice for the second complex structure.
   
   Well, it is easy to understand what we should do next. We should apply to $\tilde J$ the ``internal" automorphism $t^a \to U_{\rm in}^\dagger t_a U_{\rm in}$ with 
    \be 
    \lb{Uin}
    U_{\rm in} \ =\ \exp \left\{ \frac {i\pi}4 (E_\beta + E_{-\beta}) \right\}\,.
     \ee
     This automorphism does not act on $E_{\pm(\alpha + \beta + \gamma)}$ and hence the outer block {\it (I)} is left unchanged.
     It acts on the internal block {\it (II)} in the same way as the automorphism \p{IJ-auto} for $SU(2)$, transforming $\EuScript{I}$ 
     to $\EuScript{J}$. But it also acts nontrivially on the root vectors $E_{\pm \alpha}, E_{\pm \gamma}, E_{\pm(\alpha+\beta)}, E_{\pm (\beta+\gamma)}$ with a potential danger that the "good" structure of the blocks {\it (III)} and {\it (IV)} would be spoiled.
     Fortunately, this does not happen.
    
   Indeed, the only nonzero commutators that one has to take into account in order to determine the action of $\Omega_{\rm in}$ on $E_{\pm \alpha}$ are $[E_\beta, E_\alpha] = -E_{\alpha+\beta}$ and $[E_{-\beta}, E_{-\alpha}] = E_{-\alpha-\beta}$. We derive
    \be
    U_{\rm in}^\dagger E_\alpha U_{\rm in} = \frac 1{\sqrt{2}} (E_\alpha + i E_{\alpha+\beta}), \qquad
    U_{\rm in}^\dagger E_{-\alpha} U_{\rm in} = \frac 1{\sqrt{2}} (E_{-\alpha} -  i E_{-\alpha-\beta}), 
     \ee
and similarly for $E_{\pm \gamma}, E_{\pm(\alpha+\beta)}, E_{\pm (\beta+\gamma)}$. In other words, the automorphism
$\Omega_{\rm in}$
mixes the positive root vectors from the ``outer layer" of  $su(4)$ with the positive root vectors and the negative root vectors    with the negative ones. However, the action \p{act-tilde-J} of $\hat {\tilde J}$ is the same for the doublets $(E_\alpha, E_{-\beta-\gamma})$ and  $(E_{\alpha+\beta}, E_{-\gamma})$ and is the same for the doublets $(E_\gamma, E_{-\alpha-\beta})$ and  $(E_{\beta+\gamma}, E_{-\alpha})$. Hence the blocks {\it (III)} and {\it (IV)} in the matrix $J = \Omega_{\rm in} \tilde J \Omega_{\rm in}^T$ have the same form as in $\tilde J$, coinciding with $-\EuScript{J}$. Thus, the matrix
 \be 
 J \ = \ (\Omega_{\rm in} \Omega_{\rm out}) \,I\, (\Omega_{\rm in} \Omega_{\rm out})^T
 \ee
 has the form  $\pm \EuScript{J}$ in all its four blocks. It anticommutes with $I$.
 
 The procedure for the higher unitary groups $SU(2l) \times U(1)$ and $SU(2l+1)$ is now clear. 
 \begin{itemize}
 \item Take e.g. $SU(7)$ with six simple roots $\alpha_j$. Take the highest root  $\theta_0 =  \sum_j^6 \alpha_j$. Consider the centralizer of  $E_{\pm \theta_0}$ in $su(7)$. It is $su(5)$ with the simple roots $\alpha_{j=2,3,4,5}$. Take the highest root there: $\theta_1 = \sum_{j=2}^5 \alpha_j$. The centralizer of $E_{\pm \theta_1}$ in $su(5)$ is $su(3)$ with the highest root
 $\theta_2 = \alpha_3 + \alpha_4$. This Russian doll construction terminates at this point, because the centralizer of $E_{\pm \theta_2}$ in $su(3)$ is $u(1)$, which is Abelian  with no further roots. Call the roots $\{\theta_0, \theta_1, \theta_2\}$ the {\it basic roots} of $\mathfrak g$   \cite{Spindel}. Endow each pair $(E_{\theta_k}, E_{-\theta_k})$ with the corresponding coroot $ t^{\rm CSA}_k =  \frac 12 \theta_k^\vee$. Choose three other orthonormal basic vectors $e^{\rm CSA}_k$ in the Cartan subalgebra in an arbitrary way. Define the action of $\hat I$ on the root vectors as in \p{act-I-korni} and supplement it with 
  \be
  \lb{act-I-CSA}
  \hat I t^{\rm CSA}_k =  e^{\rm CSA}_k\,, \qquad \hat I e^{\rm CSA}_k = -t^{\rm CSA}_k\,.
    \ee 
 Then $I$ has a block-diagonal form with the matrix $\EuScript{I}$ in each block.
 
\item Calculate $J^{(0)} = \Omega_0 I \Omega^T_0$ where $\Omega_0$ is the automorphism $t_a \to U^\dagger_0 t_a U_0$ with
 \be
 U_0 \ =\ \exp \left\{ \frac {i\pi}4 (E_{\theta_0} + E_{-\theta_0}) \right\}\,.
  \ee
This automorphism transforms $\EuScript{I}$ to $\EuScript{J}$ or to $-\EuScript{J}$ for the blocks of the ``outer layer". One of these blocks includes
$E_{\pm \theta_0}, t^{\rm CSA}_0$ and $e^{\rm CSA}_0$. For $SU(7)$, the outer layer also includes five blocks associated with
the doublets $\left(E_{\alpha_1}, E_{-\sum_{j=2}^6 \alpha_j}\right)$, \ldots ,    $\left(E_{\sum_{j=1}^5 \alpha_j}, E_{-\alpha_6}\right)$ and their
complex conjugates. 

Six other $4\times4$ blocks in $J^{(0)}$ still have the form $\EuScript{I}$.

\item We transform $J^{(0)}$ further going to $J^{(1)} = \Omega_1 J^{(0)} \Omega_1^T$ with
 \be
 U_1 \ =\ \exp \left\{ \frac {i\pi}4 (E_{\theta_1} + E_{-\theta_1}) \right\}\,.
  \ee 
  After that, the block $(E_{\pm \theta_1}, t^{\rm CSA}_1, e^{\rm CSA}_1)$ and three other blocks in the ``middle layer" of $su(7)$ [and the outer layer of $su(5) \subset su(7)$] are converted from $\EuScript{I}$ to  $\pm\EuScript{J}$.  The blocks in the outer layer do not change their form. 
  \item 
  Two remaining unconverted blocks in the internal $su(3)$---the block  involving the generators $(E_{\pm \theta_2}, t_2^{\rm CSA}, e_2^{\rm CSA})$ and the block $(E_{\pm \alpha_3}, E_{\pm \alpha_4})$---are converted by the automorphism involving
  \be
 U_2 \ =\ \exp \left\{ \frac {i\pi}4 (E_{\theta_2} + E_{-\theta_2}) \right\}\,.
  \ee 
 
  We obtain thus the matrix 
  \be
  J \ =\ (\Omega_2 \Omega_1 \Omega_0)\, I  \,  (\Omega_2 \Omega_1 \Omega_0)^T\,,
   \ee
   which anticommutes with $I$.
   
   \item The third member of the quaternion triple we were looking for is $K = IJ$ or else
   \be
  K \ =\ (\tilde \Omega_2 \tilde \Omega_1 \tilde \Omega_0)\, I  \,  (\tilde \Omega_2 \tilde \Omega_1 \tilde \Omega_0)^T  
  \ee
  where $\tilde \Omega_k$ is the automorphism $t_a \to \tilde U^\dagger_k t_a \tilde U_k$ with 
 \be
 \tilde U_k \ =\ \exp \left\{ \frac {\pi}4 (E_{\theta_k} - E_{-\theta_k}) \right\}\,.
  \ee
\end{itemize}

\subsection{General construction}

It is not so difficult to generalize the construction outlined in the previous section to an arbitrary Lie group. Let the group $G$ represent a product of a non-Abelian simple block $\tilde G$ and some number of $U(1)$ factors (a generalization on the case when $G$ involves several simple non-Abelian factors is straightforward).

We will describe this generalization below, but, to make the construction more clear, we will illustrate it in the nontrivial example of 
the algebra $B_3 \equiv spin(7)$. Before doing that, let us recall the salient features of this algebra and the corresponding group.

The spinor representation of this group is 8-dimensional, and its generators may be represented by the matrices $T_{jk} = 
i\gamma_j \gamma_k/2$, where $T_{jk}$ is the generator of the spinor rotations in the $(jk)$ plane. $\gamma_j$ are the Euclidean Dirac matrices satisfying the Clifford algebra $\gamma_j \gamma_k + \gamma_k \gamma_j = 2\delta_{jk} \mathbb{1}$.
The  commutation relations are
 \be
 [T_{jk}, T_{mn}] \ =\ -i(\delta_{jm} T_{kn} - \delta_{jn} T_{km} + \delta_{kn} T_{jm} - \delta_{km} T_{jn})\,.
  \ee
 The rank of $spin(7)$ is 3, the generators $T_{12}, T_{34}$ and $T_{56}$ constitute the basis of the CSA.  The simple roots are 
\be
 \alpha \ = \ (1,-1,0),  \qquad \beta\  = \ (0,1,-1), \qquad \gamma \ = \ (0,0,1)\,.
\ee 
Two of them are long and the third is short. The corresponding simple coroots are 
\be
 \alpha^\vee  \ = \ T_{12} - T_{34};  \qquad \beta^\vee  = \ T_{34} - T_{56}; \qquad \gamma^\vee \ = \ 2T_{56}\,.
\ee 
Two of them are short and the third is long.
 Besides the simple roots, the algebra includes two  other short and four  long roots:
 \be
 \beta + \gamma  = (0,1,0); \quad \alpha+\beta+\gamma  = (1,0,0); \quad 
 \alpha+\beta = (1,0,-1); \nn \quad \beta + 2\gamma  = (0,1,1); \quad \alpha + \beta + 2\gamma  = (1,0,1); \quad 
  \theta = \alpha + 2\beta + 2\gamma  = (1,1,0)\,.
 \ee
 The corresponding coroots are
 \be
  (\beta + \gamma)^\vee = 2T_{34}; \quad  (\alpha+\beta+\gamma)^\vee  = 2T_{12}; \quad
 (\alpha+\beta)^\vee  = T_{12} - T_{56}; \nn
  (\beta + 2\gamma)^\vee  =  T_{34} + T_{56};  \quad
  \quad (\alpha + \beta + 2\gamma)^\vee   =  T_{12} + T_{56}; \quad 
  (\alpha + 2\beta + 2\gamma)^\vee  =  T_{12} + T_{34}\,.
 \ee
 All the coroots satisfy \p{coroot-norm}.
  We choose the Chevalley normalization for the root vectors. Then, for example, $E_\gamma$ may be chosen as $E_\gamma = T_{57} - i T_{67}$ so that $[E_\gamma, E_{-\gamma}] = \gamma^\vee$.

    \begin{figure} [ht!]
      \bc
    \includegraphics[width=\textwidth]{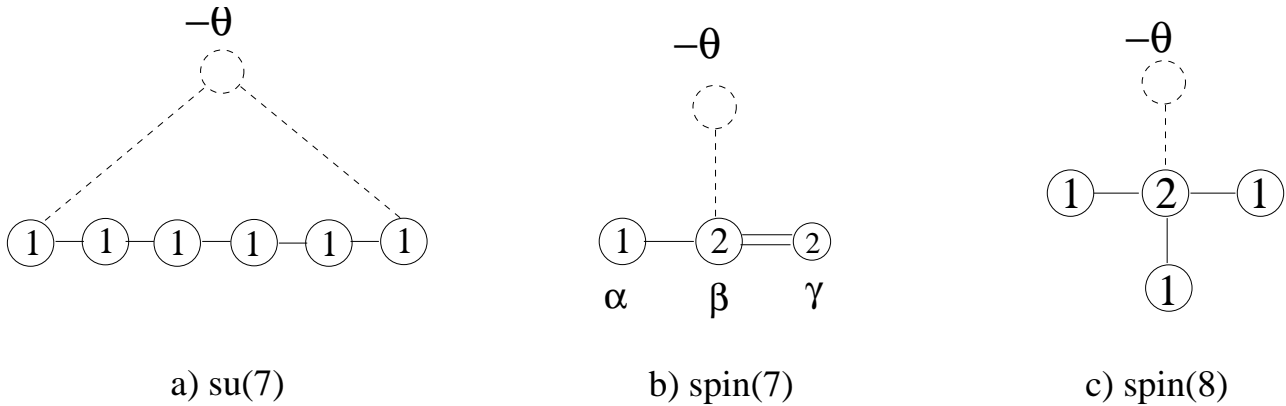}                  
     \ec
    \caption{Extended Dynkin diagrams for some algebras. $-\theta$ is the lowest root. A small circle stands for the short simple root in $spin(7)$. The numbers in the circles are the Dynkin labels---the factors with which the simple roots enter $\theta$. }        
 \label{Dynin}
    \end{figure}  
    
We proceed with our construction.

\begin{itemize}

\item Take the highest root $\theta$ in ${\mathfrak g}$. Consider the centralizer ${\mathfrak g}^{(1)}$ of 
$E_{\pm \theta}$ in $\mathfrak g$. 
An essential difference with the situation for the unitary groups is that this centralizer includes generically {\it several}
non-Abelian factors. The simplest algebra when it happens is $spin(7)$. Its extended Dynkin diagram including the simple   roots and the lowest root is drawn in Fig. \ref{Dynin}b.

The Dynkin diagram of the non-Abelian part of the centralizer of $E_{\pm \theta}$ in any Lie algebra $\mathfrak g$ is obtained from its extended Dynkin diagram by crossing out the circle $-\theta$ {\it and} also the circles for the simple roots with which  $-\theta$ is connected.
For $su(7)$ such a surgery gives the Dynkin diagram for $su(5)$. For $spin(7)$ we are left with two disconnected circles meaning that the centralizer is ${\mathfrak g}^{(1)} =  su(2) \oplus su(2) \oplus$ {\sl possible Abelian summands}, which are absent in this case. Such a centralizer involves two highest roots.

\item Take the highest root(s) in  ${\mathfrak g}^{(1)}$ and determine their centralizer ${\mathfrak g}^{(2)}$. 
If ${\mathfrak g}^{(2)}$ still includes non-Abelian summands, repeat the procedure. All the highest roots thus found constitute a set of the {\sl basic roots}.  To define the matrix $I$, we use \p{act-I-korni} and complement it as in \p{act-I-CSA}, where $t_k^{\rm CSA}$ are the basic coroots (with a factor $1/2$) and 
$e_k^{\rm CSA}$ are the remaining generators of CSA. The latter can be chosen arbitrary, one has only to take care that they are orthogonal to $t_k^{\rm CSA}$, to each other, and are normalized in the same way as $t_k^{\rm CSA}$.

Obviously, the number of $e_k^{\rm CSA}$ should coincide with the number of $t_k^{\rm CSA}$, and this imposes  a constraint on $G$. The group manifold $SU(2l+1)$ is HKT: all the centralizers ${\mathfrak g}^{(1)}, {\mathfrak g}^{(2)}, \ldots$ include only one simple non-Abelian summand with only one highest root, there are altogether $l$ basic roots, $l$ basic coroots, and there are exactly $l$ generators in the CSA, which are left. A similar counting works for $SU(2l) \times U(1)$. It also works for the group $Spin(7) \times [U(1)]^3$ (the dimension of $Spin(7)$ is 21, and we need to bring about three extra $U(1)$ factors to make the net dimension a multiple integer of 4). The set of the basic roots
in $spin(7)$ involves the highest root
$\theta_{spin(7)} \ =\ \alpha + 2\beta + 2\gamma$ and the highest roots $\alpha$ and $\gamma$ of the centralizer. Three corresponding coroots $t_k^{\rm CSA}$ are matched by three generators $e_k^{\rm CSA}$ of the three $U(1)$ factors.

But not any group manifold of dimension $4n$ is HKT. For example, $Spin(8)$ of dimension 28 is complex due to the Samelson theorem, but not HKT. Indeed, look at the extended Dynkin diagram of $spin(8)$ in Fig.\ref{Dynin}c. The  centralizer of the highest root is $su(2) \oplus su(2) \oplus su(2)$. This gives altogether $1+3 = 4$ basic roots and 4 corresponding coroots. Thereby, the CSA of $Spin(8)$ is exhausted and we cannot match $t_k^{\rm CSA}$ with $e_k^{\rm CSA}$. To get the latter, we have to endow 
$Spin(8)$ with four extra $U(1)$ factors. The manifold $Spin(8) \times [U(1)]^4$ is HKT. 

\item The highest root $\theta$ can be represented as a sum of two other roots in several different ways. For example, for $spin(7)$,
 \be
 \theta \ =\ (\alpha+ \beta) + (\beta + 2\gamma) \ =\ (\alpha+\beta+\gamma) + (\beta+\gamma) \ =\ 
 (\alpha+\beta+2\gamma) + \beta \,.
  \ee
  Let $\theta = \alpha^* + \beta^*$. In the Chevalley normalization for the root vectors, their commutators are given by 
   \p{E-norm}, \p{Bourbaki}. 
   In particular,
     \be
     \lb{commE-gen}
     [E_{\pm \theta}, E_{\mp \alpha^*}] \ =\ \pm E_{\pm \beta^*}, \qquad  [E_{\pm \theta}, E_{\mp \beta^*}] \ =\ \mp E_{\pm \alpha^*} \, .
       \ee
       Proceeding in the same way as for the unitary groups, we apply to $I$ the automorphism generated by the group element
       \be
       \lb{Uout-gen}
       U^{(0)} \ =\ \exp \left\{ \frac {i\pi}4 (E_\theta + E_{-\theta})\right\}\,.
        \ee
        Capitalizing on the fact that the commutators \p{commE-gen} have exactly the same form as in the unitary case [cf. \p{commE}],
        we deduce that the automorphism \p{Uout-gen} converts $\EuScript{I}$ to $\pm\EuScript{J}$ in the blocks associated with $E_{\pm \theta}$, $\theta^\vee$ and one more element of the CSA, and with the quartets $(E_{\pm \alpha^*}, E_{\pm \beta^*})$.
        \item The blocks associated with the centralizer ${\mathfrak g}^{(1)}$ stay unconverted at this stage. We convert some of them by the automorphism 
           \be
       \lb{Uin-gen}
       U^{(1)} \ =\ \prod_k \exp \left\{ \frac {i\pi}4 (E_{\theta_k} + E_{-\theta_k})\right\}\,,
        \ee
        where $\theta_k$ are the highest roots (sometimes, only one highest root) in ${\mathfrak g}^{(1)}$. If ${\mathfrak g}^{(2)}$ still includes non-Abelian summands, we repeat the procedure. If necessary, repeat it again...
        
        For $spin(7) \oplus u(1) \oplus u(1) \oplus u(1)$, the automorphism $U^{(0)}$ converts four outer blocks and the automorphism  $U^{(1)}$ converts two remaining blocks associated with $E_{\pm \alpha}$ and $E_{\pm \gamma}$. The complex structure $J$ thus obtained anticommutes with $I$.
        
\end{itemize}

\section{Homogeneous spaces}
\setcounter{equation}0

Not only group manifolds $G$ display the HKT structure. Some homogeneous spaces $G/H$ also do \cite{Joyce,OP}. It is rather easy to understand in our approach.

\vspace{1mm}

{\bf (i)} Consider $SU(4)$. As was explained above, the second complex structure $J$ is obtained in this case from $I$ by two consequent automorphisms \p{Uout} and \p{Uin}. The second automorphism was necessary to convert the internal block associated with the centralizer of $E_{\pm \theta}$, which for $su(4)$ is  $su(2) \oplus u(1)$. The  summand $su(2)$ includes $E_{\pm \beta}$ and $\beta^\vee$.
Suppose, however, that we are interested in the manifold $SU(4)/SU(2)$. Its tangent space involves only 12 generators of
$SU(4)$, not including $E_{\pm \beta}$ and $\beta^\vee$. But then the extra automorphism \p{Uin} is pointless. The outer automorphism \p{Uout} converting three blocks in the outer layer of $I$ is sufficient.

In fact, due to the presence of the $u(1)$ summand in the centralizer, we have several options at this point.
\begin{enumerate}
\item We can quotient $SU(4)$ over $SU(2)$  The tangent space of $SU(4)/SU(2)$ includes only two of
 the three generators of the CSA of $SU(4)$. They form together with $E_{\pm \theta}$  the outer $4 \times 4$ block. 
 \item We can quotient $SU(4)$ over the full centralizer 
 $SU(2) \times U(1)$ and multiply the result by $U(1)$.  The outer block includes in this case $E_{\pm \theta}$, the remaining generator of the CSA of $SU(4)$, and the generator of $U(1)$.
  \item We can quotient $SU(4)$ over $U(1)$ and multiply the result by  $[U(1)]^2$. In this case, the nontrivial internal block in the complex structures is left, and, to derive the complex structure $J$, we need to apply to $I$ both automorphisms    \p{Uout} and \p{Uin}.
   \item We can leave $SU(4)$ as it is and multiply it by $U(1)$. This gives the group manifold of the preceding section. 
\end{enumerate}
All these different manifolds are HKT.

\vspace{1mm}

{\bf (ii)} The simplest example is $SU(3)$. The centralizer of $E_{\pm(\alpha+\beta)}$ in $SU(3)$ is $U(1)$ generated by $t_8 \propto 
{\rm diag}(1,-2,1)$, and we can, besides  $SU(3)$, consider the coset $ \frac {SU(3)}{U(1)} \times U(1)$, which is also HKT. 

\vspace{1mm}
{\bf (iii)} For a higher unitary group like $SU(7)$, we have many possibilities. The centralizer of $E_{\pm \theta}$ in $SU(7)$ is $SU(5) \times U(1)$. This gives us the HKT manifolds $\frac {SU(7)}{SU(5)} $ and $\frac {SU(7)}{SU(5) \times U(1)} \times U(1)$. But our Russian doll construction has in this case two more levels. The root vectors $E_{\pm \theta_1}$, where $\theta_1$ is the highest root in $SU(5)$, have in $SU(5)$ a nontrivial centralizer $SU(3)\times U(1)$. Finally, the highest root of $SU(3)$ has the centralizer $U(1)$. One can factorize $SU(7)$ over any of these centralizers with or without the $U(1)$ factors and multiply the result over an appropriate number of $U(1)$ to obtain different HKT manifolds.

\vspace{1mm}

{\bf (iv)}
Our  final example are the homogeneous spaces associated with  $Spin(7)$. The centralizer of $E_{\pm \theta}$ in $Spin(7)$ is $G^{(1)} 
 = SU_\alpha (2) \times  SU_\gamma(2)$, and it is not simple. One can quotient $Spin(7)$ over $G^{(1)}$ and add the $U(1)$ factor to obtain the HKT manifold $\frac {Spin(7)}{SU(2) \times SU(2)} \times U(1)$, where the complex structures include four $4 \times 4$ blocks: the outer block associated with $\theta$ [it includes besides $E_{\pm \theta}$ the remaining generator of the CSA of $Spin(7)$ and the generator of $U(1)$] and three blocks associated with the pairs $(\alpha^*, \beta^*)$ satisfying $\alpha^* + \beta^* = \theta$. 
 
 Alternatively, one can quotient $Spin(7)$ over only one of the $SU(2)$ factors in $G^{(1)}$, for example, over $SU_\alpha(2)$.
 The relevant automorphism would in this case include the outer automorphism \p{Uout-gen} {\it and} the automorphism 
    \be
    \lb{U-gamma}
       U_\gamma \ =\ \exp \left\{ \frac {i\pi}4 (E_\gamma + E_{-\gamma})\right\}\,.
        \ee
        The automorphism \p{U-gamma} converts the block associated with the root $\gamma$, while the block associated with the root $\alpha$ is absent, and there is no need to convert it. We obtain the 20-dimensional HKT manifold $\frac {Spin(7)}{SU(2)} \times [U(1)]^2$.
        The block in the complex structures associated with $\theta$ includes besides $E_{\pm \theta}$ and $\theta^\vee$ the generator of one of $U(1)$, and the block associated with  $\beta$ includes besides $E_{\pm \beta}$ and $\beta^\vee$ the generator of another $U(1)$.
        
      It is clear now how to construct an arbitrary HKT homogeneous space.
      
      \begin{itemize}
      \item At  level 0, take any simple or semi-simple group $G$. As was shown in the preceding section, the group manifold $G \times [U(1)]^p$ including the appropriate\footnote{We hope that the reader understood from the considered examples what ``appropriate" means---the dimension of the CSA of   $G \times [U(1)]^p$ should be 2 times more than the number of the basic roots in the described Russian doll construction.} number of the $U(1)$ factors is HKT.
      \item Consider the centralizer $G^{(1)}$ of the set of the highest roots in  
      $G$. It includes one or several simple factors and $U(1)$ factors. Let ${\mathfrak S}_1$ be a set of all the products of these factors. 
      \item 
       An HKT homogeneous space of level 1 is the quotient of $G$ over any such product $s_1 \in  {\mathfrak S}_1$ multiplied by 
      an  appropriate number $p$ of $U(1)$ factors. $p$ should be such that the number $p+q$, where $q$ is the number of the CSA generators that are still left in the quotient, is twice as large as the number of the remaining basic roots.
      \item We may now pick up the highest roots in all the elements of ${\mathfrak S}_1$ that involve non-Abelian factors, consider their centralizers, define ${\mathfrak S}_2$ as the set of all the products of the simple factors and $U(1)$ factors of all these centralizers and repeat the procedure. We thus obtain the HKT homogeneous spaces of level 2.
       \item If possible, we may repeat the procedure again... 
      \end{itemize}
      The full list of the homogeneous spaces derived from all classical simple Lie groups $G$ is given in Table 2 of Ref. \cite{OP}.

\section{Deformations and supersymmetry}
\setcounter{equation}0

Our proof was constructed assuming the ``round" Killing metric \p{Killing} enjoying rich  isometry that corresponds to the left and right group multiplications.  But the Killing metric can be deformed so that the deformed metric is still HKT. 
Take the example of the Hopf manifold $SU(2) \times U(1) \equiv S^3 \times S^1$. Its metric can be written in the form
\be
\lb{Hopf-metr}
ds^2 \ =\ \frac {(dx_M)^2}{r^2}\,,
 \ee
 where $M = 1,2,3,4$, \ $r^2 = (x_M)^2$, and the points with the coordinates $x_M$ and $2x_M$ are identified. This metric is HKT. But it is known \cite{HKT} that  {\it any} metric obtained from \p{Hopf-metr} by a conformal transformation is still HKT. The metric ``squashed" in such a way would lose its isometries. Note  the complex structures 
 $(I_M{}^N,  J_M{}^N,  K_M{}^N)$ for all such metrics are the same, keeping the form \p{block-I}, \p{block-J}, \p{block-K}.
   
   The only known universal way to generate multidimensional HKT metrics is based on the {\sl harmonic superspace} technique. One starts by choosing certain harmonic prepotentials and then the metric can be found as a solution of a set of complicated harmonic equations. [The problem is thus essentially more intricate than in the K\"ahler case: any K\"ahler metric in a given chart can be represented as a double derivative of an arbitrary chosen K\"ahler potential, $h_{j \bar k} \ =\ \partial_j \partial_{\bar k} {\cal K} (z^m, \bar z^m)$.]  We will not describe this method here, referring
  the reader to the monography \cite{HSS}, to the paper \cite{DI}, where the problem of reconstructing the HKT metric in a framework of a certain ${\cal N} = 4$ supersymmetric sigma model in the harmonic superspace representation was solved, and to 
  \cite{FIS-nonlin} with a simple demonstration that the metric thus obtained is HKT, indeed.  
  
  The analysis of \cite{DI,FIS-nonlin} shows that, while for a 4-dimensional HKT manifolds, the allowed deformations include only one functional parameter, the conformal factor, one has much more freedom in higher dimensions. In dimension 8, one disposes of 6 such parameters. One obtains a large ``Obata family" of the HKT manifolds with coinciding Obata connections\footnote{The Obata connection \cite{Obata} is the torsionless connection with respect to which all three complex structures are covariantly constant. With such a choice, the metric need not be covariantly constant so that not only the directions of a tangent vector, but also its length may be changed under a parallel transport. We address the reader to Ref. \cite{FIS-nonlin} for detailed explanations.}  and coinciding complex structures.
  
  In recent \cite{DI-SU3}, the metric that seems to be a particular member of this family, which includes also the ``round" $SU(3)$ metric,  has been explicitly constructed. This metric 
  does not have the isometry $SU(3)\times SU(3)$, its isometry is only $SU(3)\times SU(2)$. The problem of  constructing
  the Killing metric \p{Killing} in the superspace approach rests by now unresolved.
  
  At the current state of knowledge, one cannot exclude a logical possibility that such a construction is not possible. This would mean that the supersymmetric technique suggested in \cite{DI} is not universal: it allows one to construct {\it some} HKT metrics, but not all of them. Personally, I think that this construction {\it is} universal, but the rigorous proof of this conjecture has not been given yet. One can mention, however, that such a universality proof exists for a similar  supersymmetric construction of  hyper-K\"ahler metrics: any such metric can be reproduced by choosing an appropriate prepotential \cite{L+4-HK}.
  We believe that a more attentive study of the supersymmetric constructions
 for $SU(3)$ and, eventually, reproducing in this way the Killing $SU(3)$ metric may help to construct a similar proof for all HKT manifolds.

 \section*{Acknowledgements}
 I am indebted to Evgeny Ivanov for many illuminating discussions and valuable comments on the manuscript and to Alexei Isaev and George Papadopoulos for useful correspondence and valuable remarks.

\end{document}